\newtheorem{prop}{Proposition}
\newcommand{\ket}[1]{\vert #1 \rangle}
\newcommand{\bra}[1]{\langle #1 \vert}
\newcommand{\tr}{\mathrm{Tr}}
\newcommand{\ct}{^{\dagger}}
\newcommand{\av}[1]{\lvert #1\rvert}
\newlength\figureheight
\newlength\figurewidth
\begin{document}

\title{Robust Characterization of Loss Rates}
\author{Joel J. Wallman}
\author{Marie Barnhill}
\affiliation{Institute for Quantum Computing, University of Waterloo, 
Waterloo, Ontario N2L 3G1, Canada}
\affiliation{Department of Applied Mathematics, University of Waterloo, 
Waterloo, Ontario N2L 3G1, Canada}
\author{Joseph Emerson}
\affiliation{Institute for Quantum Computing, University of Waterloo, 
Waterloo, Ontario N2L 3G1, Canada}
\affiliation{Department of Applied Mathematics, University of Waterloo, 
Waterloo, Ontario N2L 3G1, Canada}
\affiliation{Canadian Institute for Advanced Research, Toronto, Ontario M5G 
1Z8, Canada}
\date{\today}

\begin{abstract}
Many physical implementations of qubits---including ion traps, optical lattices 
and linear optics---suffer from loss. A nonzero probability of irretrievably 
losing a qubit can be a substantial obstacle to fault-tolerant methods of 
processing quantum information, requiring new techniques to safeguard against 
loss that introduce an additional overhead that depends upon the loss rate. 
Here we present a scalable and platform-independent protocol for 
estimating the average loss rate (averaged over all input states) resulting 
from an arbitrary Markovian noise process, as well as an independent estimate 
of detector efficiency. Moreover, we show that our protocol gives an 
additional constraint on estimated parameters from randomized benchmarking that 
improves the reliability of the estimated error rate and provides a new 
indicator for non-Markovian signatures in the experimental data. We also derive 
a bound for the state-dependent loss rate in terms of the average loss rate. 
\end{abstract}

\pacs{03.65.Aa, 03.65.Wj, 03.65.Yz, 03.67.Lx, 03.67.Pp}

\maketitle

In order to build practical devices for processing and transmitting quantum 
information, the rate of decoherence and other errors must be below certain 
fault-tolerant thresholds. In particular, many experimental implementations of 
qubits---such as ion traps~\cite{Blakestad2011,Wright2013}, optical 
lattices~\cite{Vala2005} and linear optics~\cite{Fortescue2014}---suffer 
from irretrievable loss, that is, there is a nonzero probability of the qubit 
vanishing (as opposed to leaking to other energy levels). Such loss of 
normalization can be a substantial obstacle to many quantum information 
protocols, requiring different error-correction techniques to achieve 
fault-tolerance~\cite{Varnava2006,Fortescue2014,Whiteside2014}.
For example, the surface code may not be used directly if there is any 
probability of losing a qubit, while for the topological cluster states, 
loss rates of less than $1\%$ are required to avoid impractical 
overheads~\cite{Whiteside2014}.

However, there are two substantial challenges in characterizing loss. Firstly, 
the loss rate may depend on the state of the qubit, such as when a qubit is 
encoded in a superposition of vacuum and single-photon states. Secondly, the 
loss due to imperfect operations has to be distinguished from the inefficiency 
of the detector~\cite{Hogg2014}. Quantum process 
tomography~\cite{Chuang1997,Poyatos1997} could be used to characterize loss, 
however, it is inefficient in the number of qubits and is sensitive to state 
preparation and measurement (SPAM) errors~\cite{Merkel2013} and so cannot 
distinguish between loss due to imperfect operations and inefficient detectors.

In this Letter, we present a robust and efficient protocol that characterizes 
the loss rate due to imperfect operations averaged over input states. Our 
protocol is platform-independent, simple to implement and analyze and only 
assumes that the noise is Markovian. We begin by defining survival rates and 
then present our protocol and derive the associated analytical decay curve 
under the assumption of Markovian noise. We then prove that the average loss 
rate estimated via our protocol provides a practical bound on the loss rate for 
any state. Since our protocol is robust to SPAM, the choice of state and 
measurement in our protocol is unconstrained. However, we discuss two 
particularly suitable choices. The first of these maximizes the signal and the 
second allows one of the parameters in randomized 
benchmarking~\cite{Magesan2011} to be independently estimated and leads to a 
new test for non-Markovian effects. The second choice also allows for an 
estimate of the unitarity metric introduced in Ref.~\cite{Wallman2015} with no 
additional experimental overhead. In addition, we demonstrate that our protocol 
produces reliable estimates of loss rates through a numerical simulations under 
an error model that has the greatest variation in loss over states. 
Finally we illustrate how the analytical model breaks down when applied to 
systems that have reversible (coherent) leakage to an ancillary level.

\textit{Average survival rates}---In order to distinguish between inefficient 
detectors and lossy processes, we now define survival rates. Many methods for 
characterizing a process $\mathcal{E}$ (including randomized 
benchmarking~\cite{Emerson2005,Knill2008,Magesan2011}) assume it is 
trace-preserving. 
However, many experimental processes are not trace-preserving, but instead a 
state $\rho$ has a survival rate under $\mathcal{E}$
\begin{align}\label{def:survival_rate}
S(\rho|\mathcal{E}) = \frac{\tr[\mathcal{E}(\rho)]}{\tr\rho}
\end{align}
that is less than 1, or, equivalently, a nonzero loss rate 
$L(\rho|\mathcal{E})=1-S(\rho|\mathcal{E})$. Since the trace is linear and any 
unnormalized density matrix is proportional to a unit-trace density matrix, the 
survival rate averaged over all states (hereafter the average survival rate) is 
simply the survival rate of the maximally mixed state, that is, 
$S(\mathcal{E}):= S(\tfrac{1}{d}\mathds{I}|\mathcal{E})$. Correspondingly, the 
average loss rate is $L(\mathcal{E})=1-S(\mathcal{E})$.

\textit{Experimental protocol}---We now present a protocol for characterizing 
the average survival rate $S(\mathcal{E})$ in the experimental implementations 
$\{\mathcal{E}_g\}$ of a set of gates 
$\mathcal{G}=\{g_1,\ldots,g_{\av{\mathcal{G}}}\}$ that are at least a unitary 
1-design (e.g., the Pauli or Clifford groups)~\cite{Dankert2009}. For 
simplicity, we assume the noise is time- and gate-independent Markovian noise, 
so that $\mathcal{E}_g = g\circ\mathcal{E}$ for some fixed map $\mathcal{E}$ 
where $\circ$ denotes composition (i.e., apply $\mathcal{E}$ then $g$). This 
approach can be extended to accommodate time- and gate-dependent noise and a 
model of non-Markovian noise by applying the approaches of 
Refs.~\cite{Magesan2012a,Wallman2014,Ball2015}.

Our protocol for estimating $S(\mathcal{E})$ is as follows.
\begin{enumerate}
	\item Choose a sequence length $m\in\mathbb{N}$.
	\item Choose a random sequence ${\bf k} = (k_1,\ldots, k_m)$ of $m$ 
	integers uniformly at random, where $k_j\in \{1,\ldots,\av{\mathcal{G}}\}$.
	\item Prepare a state $\rho$.
	\item Apply the sequence of gates $g_{k_m}\circ\ldots\circ g_{k_1}$.
	\item Measure some operator $Q$ (e.g., a self-adjoint operator or POVM 
	element).
	\item Repeat steps 3--5 to estimate
	\begin{align}
		Q_{\bf k} = \tr\left[Q g_{k_m}\circ\mathcal{E}\circ\ldots\circ 		
		g_{k_1}\circ\mathcal{E}(\rho)\right]
	\end{align}
	to a desired precision.
	\item Repeat steps 2--6 to estimate
	\begin{align}
	\mathds{E}_{\bf k}Q_{\bf k} = \av{\mathcal{G}}^{-m}\sum_{\bf k} Q_{\bf k}
	\end{align}
	to a desired precision (see, e.g., Ref.~\cite{Wallman2014} for methods to bound
	the number of sequences required to obtain a given precision).
	\item Repeat steps 1--7 for multiple $m$ and fit to the decay curve
	\begin{align}\label{eq:decay_curve}
	\mathds{E}_{\bf k}Q_{\bf k} = D(Q)S(\rho|\mathcal{E})S^{m-1}(\mathcal{E}), 
	\end{align}
	derived below, to obtain estimates of $S(\mathcal{E})$ and 
	$S(\rho|\mathcal{E})D(Q)$ where $D(Q)=\tr\,Q/d$.
\end{enumerate}
(Note that the above protocol differs from the randomized benchmarking protocol 
of Ref.~\cite{Magesan2011} in that no inversion gate is applied prior to 
the measurement.)

Results of numerical simulations of our 
protocol for a specific loss model are illustrated in 
Fig.~\ref{fig:detectorerr_and_loss}, demonstrating the robust performance of 
our protocol. 

For the numerical simulation, the set of operations $\mathcal{G}$ is the set of 
single-qubit Paulis, and we modeled the error as
$\mathcal{E}$ as
\begin{align}
\mathcal{E}(\rho) = 
(\ket{0}\bra{0}+\alpha\ket{1}\bra{1})\rho(\ket{0}\bra{0}+\alpha\ket{1}\bra{1}),
\end{align}
where $\alpha=0.99$. The channel $\mathcal{E}$ corresponds to loss from the 
$\ket{1}$ state and, as proven in Proposition~\ref{prop:worst_case} below, has 
the greatest variation of loss over states. The measurement was set to $0.87 
\ket{\phi}\bra{\phi} + 0.95 \ket{\phi^{\bot}}\bra{\phi^{\bot}}$ where 
$\{\ket{\phi},\ket{\phi^{\bot}}\}$ is a randomly-chosen orthonormal basis to 
model an imperfect detector.

\begin{figure}[t]
	\centering
	\includegraphics[width=\linewidth,keepaspectratio]{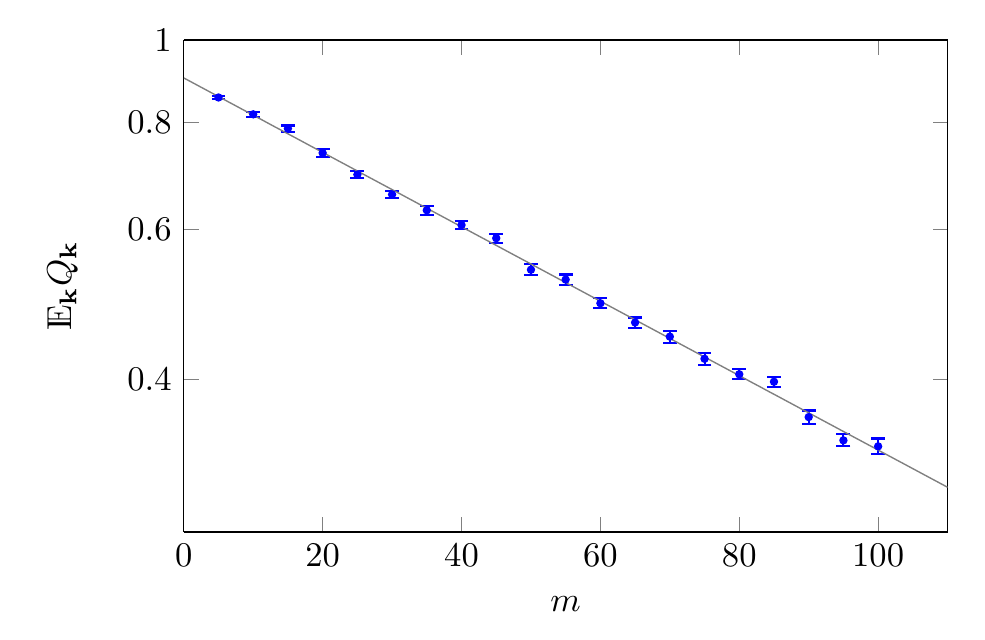}
	\caption{(Color online) Semilog plot of numerical data for our protocol demonstrating robust identification of  the average loss rate. The numerical data is obtained for  
		loss model described by Eq.~\eqref{eq:loss_channel}. 
The data points are the estimates of 
		$\mathds{E}_{\bf k}Q_{\bf k}$ for $m = 5, 10, ..., 100$ obtained by 
		sampling 30 random sequences of single-qubit Pauli operators (unitary 1-design) and the 
		error bars are the standard errors of the mean. The grey line is the 
		fit to the model in Eq.~\eqref{eq:decay_curve}, obtained using MATLAB's 
		nlinfit package, which gave $S(\mathcal{E}) = 0.9900(2)$ and $D(Q)= 
		0.902(8)$, compared to the theoretical values $S(\mathcal{E}) = 
		0.9901$ and $D(Q) = 0.910$ respectively.}
	\label{fig:detectorerr_and_loss}
\end{figure}

\textit{Analysis.}---To derive the decay curve in Eq.~\eqref{eq:decay_curve}, 
note that averaging over all sequences corresponds to independently averaging 
over all $g_{k_j}$, so that
\begin{align}\label{eq:average_operator}
	\mathds{E}_{\bf k}Q_{\bf k} &= \tr\left[Q 
	\overline{\mathcal{G}}\circ\mathcal{E}\circ\ldots\circ 		
	\overline{\mathcal{G}}\circ\mathcal{E}(\rho)\right]	\,
\end{align}
where $\overline{\mathcal{G}}(\rho) = 
\av{\mathcal{G}}^{-1}\sum_{g\in\mathcal{G}} g\rho g\ct$ (noting that a unitary 
channel corresponds to unitary conjugation). 
Since $\mathcal{G}$ is a unitary 1-design (and a linear map), 
$\overline{\mathcal{G}}(A) = \tr(A)\mathds{I}/d$ for all 
$d\times d$ matrices $A$~\cite{Dankert2009,Ambainis2000}. Therefore 
$\overline{\mathcal{G}}\circ\mathcal{E}(\rho) = 
S(\rho|\mathcal{E})\mathds{I}/d$ and 
$\overline{\mathcal{G}}\circ\mathcal{E}(\mathds{I}/d) = 
S(\mathcal{E})\mathds{I}/d$ and so Eq.~\eqref{eq:average_operator} simplifies 
to Eq.~\eqref{eq:decay_curve}.

The average survival rate obtained via our protocol is one possible figure of 
merit that could be used to characterize loss, with an important alternative 
being the worst-case loss. However, as we now prove, the average loss provides a 
practical bound for the worst-case loss:

\begin{prop}\label{prop:worst_case}
	For any quantum channel $\mathcal{E}$ and state $\rho$ for a 
	$d$-dimensional system,
	\begin{align*}
	L(\rho|\mathcal{E}) \leq L(\mathcal{E})d.
	\end{align*}
	Moreover, for all $d$ there exist channels $\mathcal{E}$ and states $\rho$ 
	that saturate this bound.
\end{prop}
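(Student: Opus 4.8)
The plan is to recognize that the survival functional is the expectation value of a single fixed positive operator, after which the claim reduces to an elementary statement about its eigenvalues. Since $\mathcal{E}$ is completely positive and trace-nonincreasing, it admits a Kraus decomposition $\mathcal{E}(\sigma)=\sum_k K_k\sigma K_k\ct$, so that $\tr[\mathcal{E}(\sigma)]=\tr[M\sigma]$ with $M=\sum_k K_k\ct K_k$. Requiring $\tr[\mathcal{E}(\sigma)]\le\tr\sigma$ on all positive $\sigma$ forces $0\le M\le\unit$, hence every eigenvalue of $M$ lies in $[0,1]$. For a state $\rho$ (so $\tr\rho=1$) this gives $S(\rho|\mathcal{E})=\tr[M\rho]$ and $S(\mathcal{E})=\tr[M\unit/d]=\tr(M)/d$.

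First I would substitute these expressions into $L(\rho|\mathcal{E})\le d\,L(\mathcal{E})$ and expand $L=1-S$, which turns the target into the purely linear-algebraic claim $\tr(M)-\tr[M\rho]\le d-1$. Next, because $\rho$ is a state, $\tr[M\rho]$ is a convex combination of the eigenvalues of $M$ (with weights $\bra{e_i}\rho\ket{e_i}$ summing to $\tr\rho=1$) and is therefore bounded below by the smallest eigenvalue $\lambda_{\min}(M)$. Hence $\tr(M)-\tr[M\rho]\le\tr(M)-\lambda_{\min}(M)$, which is exactly the sum of the remaining $d-1$ eigenvalues of $M$; since each is at most $1$, this sum is at most $d-1$, closing the bound.

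For the saturation claim I would exhibit, for each $d$, the channel with the single Kraus operator $K=\unit-\ket{\psi}\bra{\psi}$, the projector onto the orthogonal complement of a chosen pure state $\ket{\psi}$, together with the input $\rho=\ket{\psi}\bra{\psi}$. Then $M=K\ct K=\unit-\ket{\psi}\bra{\psi}$ has eigenvalue $0$ on $\ket{\psi}$ and $1$ elsewhere, so $S(\rho|\mathcal{E})=0$ and $S(\mathcal{E})=(d-1)/d$, giving $L(\rho|\mathcal{E})=1=d\,L(\mathcal{E})$. For $d=2$ the softer choice $K=\ket{0}\bra{0}+\alpha\ket{1}\bra{1}$ (so $M$ has eigenvalues $1$ and $\alpha^2$) with $\rho=\ket{1}\bra{1}$ likewise saturates, recovering the model used in the numerical simulation and confirming that it maximizes the variation of loss over states.

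I expect no serious obstacle. The only point requiring care is translating trace non-increase into the \emph{upper} bound $M\le\unit$ rather than merely positivity $M\ge0$, since it is precisely the cap of $1$ on each eigenvalue that makes the sum of the $d-1$ remaining eigenvalues equal $d-1$ in the extremal case; everything else is the two one-line eigenvalue estimates and the explicit construction.
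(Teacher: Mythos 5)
Your proof is correct, but it reaches the bound by a genuinely different route than the paper. The paper never introduces Kraus operators or the effect operator $M=\sum_k K_k\ct K_k$: it instead forms the complementary state $\rho'=(\mathds{I}-\rho)/(d-1)$, notes that this is a valid quantum state, and applies the physicality constraint $\tr\,\mathcal{E}(\rho')\le 1$ directly; by linearity this single inequality \emph{is} the target bound $\tr[\mathcal{E}(\mathds{I})]-\tr[\mathcal{E}(\rho)]\le d-1$. In your language, the paper's one step $\tr[M\rho']\le 1$ replaces your two spectral estimates ($\tr[M\rho]\ge\lambda_{\min}(M)$, and $\tr(M)-\lambda_{\min}(M)\le d-1$); both arguments rest on exactly the physical input you rightly flag as essential, namely trace non-increase, i.e.\ $M\le\unit$ rather than merely $M\ge 0$. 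What each approach buys: the paper's is more economical and more general in appearance---it needs only linearity and the fact that survival probabilities of physical states are at most one, with no appeal to a Kraus representation or the spectral theorem---while your route gives a sharper structural picture of the extremal case: equality forces $\rho$ to be supported on the minimal eigenspace of $M$ with all remaining eigenvalues equal to $1$, which explains why the paper's saturating family in Eq.~\eqref{eq:loss_channel} (loss from a single basis state) is essentially the canonical extremal model. Indeed, your projector example is its $\alpha=0$ member, and your ``softer'' qubit example is the same family with the roles of $\ket{0}$ and $\ket{1}$ exchanged, matching the channel used in the paper's numerical simulation.
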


\begin{proof}
	Let $\rho$ and $\mathcal{E}$ be arbitrary states of and channels for a 
	$d$-dimensional system. Since the trace is linear and any valid 
	state can be written as $\rho = \tau\tr \rho$ where $\tau$ is a unit-trace 
	density matrix, the survival rate is independent of $\tr \rho$, so we 
	assume 
	$\tr\rho=1$ without loss of generality.
	
	Let $\rho' = (\mathds{I}-\rho)/(d-1)$, which is a valid quantum state since 
	it 
	is Hermitian and positive-semidefinite by construction and has unit trace. 
	Since $\rho'$ is a valid quantum state, the probability of detecting a 
	system 
	in the state $\rho'$ after applying $\mathcal{E}$ is a true probability and 
	thus
	\begin{align}
	\frac{dS(\mathcal{E})-S(\rho|\mathcal{E})}{d-1}=\frac{\tr[\mathcal{E}(\mathds{I})-\mathcal{E}(\rho)]}{d-1}
	= \tr \mathcal{E}(\rho') \leq 1,
	\end{align}
	where we have used the fact that quantum channels and the trace are linear. 
	Rearranging and substituting $L = 1-S$ gives the desired bound.
	
	To see that the bound is saturated, fix $d$ and consider the channel
	\begin{align}\label{eq:loss_channel}
	\mathcal{E}(\rho) = [\mathds{I}+(\alpha-1) 
	\ket{0}\bra{0}]\rho[\mathds{I}+(\alpha-1) \ket{0}\bra{0}]
	\end{align} 
	for $\alpha\in[0,1]$. For this channel, 
	\begin{align}
	\mathcal{E}(\ket{j}\bra{j}) = \begin{cases} \alpha^2\ket{0}\bra{0} & j=0 \\
	\ket{j}\bra{j} & j\neq 0, \end{cases}
	\end{align} 
	so $L(\ket{j}\bra{j}|\mathcal{E})=\delta_j(1-\alpha^2)$ and 
	\begin{align}
	L(\mathcal{E}) = \frac{1}{d}\sum_j L(\ket{j}\bra{j}|\mathcal{E}) = 
	\frac{1-\alpha^2}{d}.
	\end{align} 
	Therefore there exists a channel $\mathcal{E}$ and a state $\rho$ such that 
	$L(\rho|\mathcal{E})=L(\mathcal{E})d$.
\end{proof}

For average survival rates close to 1, the estimate of $S(\rho|\mathcal{E})D(Q)$
can be used to directly estimate $D(Q)$, since
\begin{align}
S(\rho|\mathcal{E})D(Q) \in [(1-d [1-S(\mathcal{E})])D(Q),D(Q)]
\end{align}
by Proposition~\ref{prop:worst_case}. Consequently, 
$S(\rho|\mathcal{E})D(Q)/S(\mathcal{E})$ will give an estimate of $D(Q)$ that 
is accurate to within a factor of $(d-1)L(\mathcal{E})$. Estimating $D(Q)$ can 
be used to estimate the efficiency of the detector as
\begin{align}
\eta = \frac{D(Q)}{D(Q_{\rm ideal})},
\end{align}
where $Q_{\rm ideal}$ and $Q$ are the ideal and actual measurement operators. 
That is, $\eta$ is the ratio of observed to expected detector ``clicks'', 
averaged (independently) over all states.

\textit{Choosing states and measurements.}---Our protocol is robust to SPAM 
errors, in that the choice of the state $\rho$ and measurement operator $Q$ 
only effect the value of the constant $S(\rho|\mathcal{E})D(Q)$. However, there 
are two choices of $Q$ and $\rho$ that have particular benefits.

(i) The most useful scenario corresponds to choosing $\mathcal{G}$ to be a unitary 
2-design~\cite{Dankert2009} and choosing $\rho,Q\approx\ket{0}\bra{0}$ as in 
randomized benchmarking~\cite{Magesan2011}. There are two major advantages to 
this choice. Firstly, with this choice the same data can also be used to 
estimate the unitarity of $\mathcal{E}$, a quantitative measure of how the 
noise $\mathcal{E}$ effects the purity of input states~\cite{Wallman2015}. 
Secondly, estimating the constant prefactor in Eq.~\eqref{eq:decay_curve} with 
this choice is particularly useful because it allows an additional and vital 
constraint to be imposed when fitting randomized benchmarking data to the 
fidelity decay curve. In Ref.~\cite{Magesan2011}, it was shown that the 
fidelity decay curve is
\begin{align}
A(\mathcal{E}') p^m + B(\mathcal{E}') 
\end{align}
where $p$ is related to the average gate fidelity, $\mathcal{E}'$ is the 
average error under the convention that the experimental implementation of $g$ 
is written as $\mathcal{E}_g=\mathcal{E}'\circ g$ (in contrast to our choice of 
$\mathcal{E}_g=g\circ \mathcal{E}$) and
\begin{align}
A(\mathcal{E}') &= \tr[Q\mathcal{E}'(\rho-\tfrac{\mathds{I}}{d})] \notag\\
B(\mathcal{E}') &= \tr[Q\mathcal{E}'(\tfrac{\mathds{I}}{d})].
\end{align}
If the alternative convention of writing errors as $\mathcal{E}'\circ g$ is 
applied to Eq.~\eqref{eq:average_operator}, then the constant 
prefactor $S(\rho|\mathcal{E})D(Q)$ in Eq.~\eqref{eq:decay_curve} becomes 
$B(\mathcal{E}')$. Since the fidelity decay curve is in terms of observable 
properties, it is independent of the choice of convention and so 
$B(\mathcal{E}')=S(\rho|\mathcal{E})D(Q)$. Obtaining a precise estimate of the constant term for randomized benchmarking  is important for two reasons. First, underestimating the constant term 
$B(\mathcal{E}')$ [and hence overestimating the coefficient $A(\mathcal{E}')$] 
will lead to an overestimate of the decay parameter $p$, or, equivalently, an 
underestimate of the average gate infidelity. That is, \textit{underestimating 
the constant term will falsely indicate that the gates are performing better 
than they actually are}. Second, the values of the constants $A$ and $B$ in randomized 
benchmarking are not completely independent: they must satisfy particular 
constraints in order to correspond to physical Markovian noise processes. In 
particular, for qubits, note that
\begin{align}
B(\mathcal{E}) - A(\mathcal{E}) = \tr[Q\mathcal{E}(\rho^{\bot})]
\end{align}
where $\rho^{\bot}$ is the state whose Bloch vector is anti-parallel to that of 
$\rho$. Therefore $B(\mathcal{E}) - A(\mathcal{E})$ is a probability and so 
must be nonnegative if the noise is truly Markovian. Consequently, if 
$B(\mathcal{E}) - A(\mathcal{E})$ is (strongly) negative, then either the noise 
is non-Markovian or strongly gate dependent and so the estimate of the average 
gate infidelity in randomized benchmarking is not known to be accurate. 
Moreover, if the prefactor $S(\rho|\mathcal{E})D(Q)$ in 
Eq.~\eqref{eq:decay_curve} is estimated by setting $m=1$, then the resulting estimate is unaffected by the presence of non-Markovian effects  between sequential operations (since there is only one operation applied). Therefore if 
the estimate obtained by setting $m=1$ differs from the estimate obtained from 
fitting the randomized benchmarking data under the protocol of 
Ref.~\cite{Magesan2011}, then this disagreement indicates that non-Markovian effects are present in the data for the latter.

(ii) Alternatively, given any allowed choice of  $\mathcal{G}$, choosing $Q\approx \mathds{I}$ and $\rho$ to be any unit-trace density matrix will maximize the value of the constant prefactor in 
Eq.~\eqref{eq:decay_curve}, reducing the number of experiments required to 
obtain a desired precision  (since $\mathds{E}_{\bf k}Q_{\bf k}$ is close to one  for sufficiently small $m$). Note that this data can be collected under the same experimental configuration as case (i),  where 
$Q=\ket{0}\bra{0}$ and $\mathcal{G}$ is a unitary 2-design, by simply re-incorporating the outcomes associated with  $\mathds{I}-Q$ that are discarded in case (i). This data gives independent information because by assumption the probabilities of these two outcomes are not constrained to add to 1 due to presence of loss.

\textit{Coherent leakage.}---A distinct, but closely related error to loss is 
(coherent) leakage, wherein the system is ``leaked'' from the qubit subspace to 
other energy levels. Leakage errors are non-Markovian errors on the qubit 
subspace, since the system can return to the qubit subspace. Coherent leakage 
is a known consequence of control imperfections in certain implementations of 
the coupling gate in ion traps~\cite{Schindler2013} and the controlled-phase 
gate in superconducting qubits~\cite{DiCarlo2009,Dicarlo2010}. 
Fig.~\ref{fig:loss_failure} shows the results of our protocol given a model of 
coherent leakage, in particular, an error model for a random (fixed) unitary 
acting on a qutrit with a random relative phase between the leakage level and 
the qubit levels. The results initially appear to fit a single exponential 
decay, but then quickly converge to a constant, similar to the behavior 
observed in Ref.~\cite{Epstein2014}. Consequently, if experimental data for our 
protocol does not neatly fit a single exponential, one explanation would be 
that there is a leakage level that has not been accounted for. A simple 
protocol for estimating rates of coherent leakage has been provided previously 
in Ref.~\cite{Wallman2015a}.

\begin{figure}[t]
	\centering
	\includegraphics[width=\linewidth,keepaspectratio]{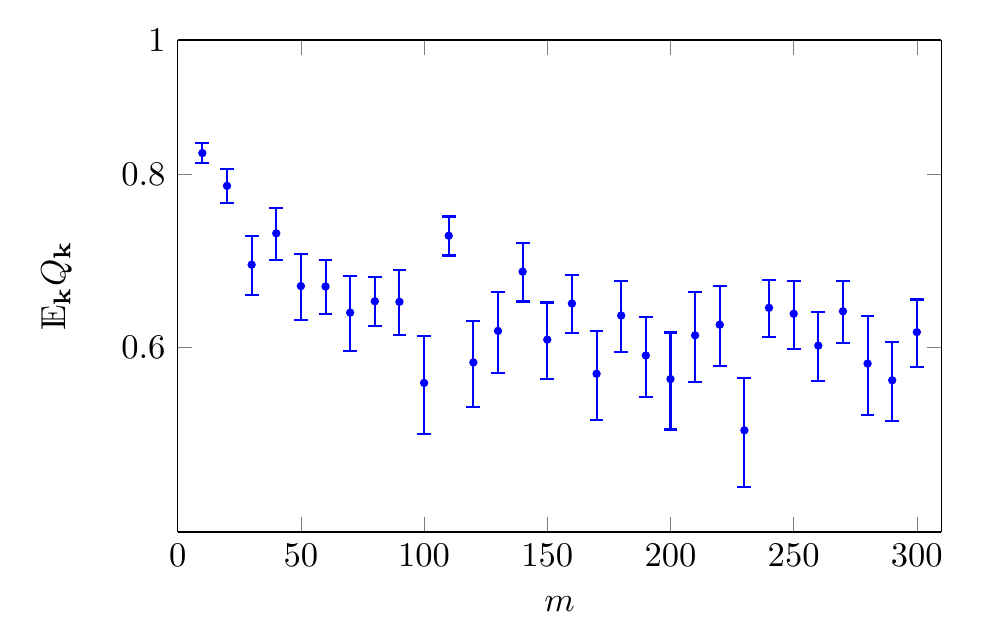}
	\caption{(Color online) Semi-log plot demonstrating the signature of non-Markovian leakage under our protocol. Numerical results obtained for a model of coherent leakage from a qubit subspace to a 
	third level under a small random unitary on the full qutrit space. The 
	data points are the estimates of $\mathds{E}_{\bf k}Q_{\bf k}$ for $m = 10, 
	20, ..., 300$ obtained by sampling 30 random sequences of single-qubit 
	Pauli operators and the error bars are the standard errors of the mean. }
	\label{fig:loss_failure}
\end{figure}

\textit{Conclusion}---In this paper, we have presented a platform-independent 
and robust protocol for characterizing the average loss rate due to noisy 
implementations of operations. Our protocol can also be used to estimate the
detector efficiency, provided the loss rate due to noisy operations is 
sufficiently small. Since our protocol is easy to implement, it is also a 
promising technique for experimentally optimizing quantum control, as done, 
e.g. in Ref.~\cite{Kelly2014} using randomized benchmarking experiments.

Experimentally implementing our protocol yields a single exponential decay 
curve which can be fitted to our analytical expression to obtain the average 
loss rate. If the experimental data deviates significantly from a single decay 
curve, the experimental noise is either strongly gate-dependent or 
non-Markovian. We have illustrated that the decay can be observed and 
fitted in practice through numerical simulations of loss for a specific error 
model and also that non-Markovian leakage to an ancillary level results in a 
deviation from a single exponential. However, fully characterizing how the 
present protocol (and other randomization-based protocols) behave in the 
presence of non-Markovian noise remains an open problem.

Our protocol is scalable and robust against state-preparation and measurement 
errors. However, particular choices of preparations and measurements give extra 
information. If the set of gates is chosen to be a unitary 2-design and the 
preparation and measurement are the same as those used in standard randomized 
benchmarking, then our current protocol can be applied to directly estimate one 
of the parameters in randomized benchmarking and thus provides a test to 
indicate non-Markovian noise. Furthermore, with this choice of preparation and 
measurement, the same data obtained via our protocol can be used to estimate 
the unitarity presented in Ref.~\cite{Wallman2015} and thus to estimate how 
close the noise is to depolarizing noise.

As with standard RB, obtaining rigorous confidence intervals on the parameters 
obtained from our protocol is still an open problem, though techniques bounding 
the number of sequences to be sampled~\cite{Wallman2014} and using Bayesian 
methods to refine prior information~\cite{Granade2014} should also be 
applicable to our protocol.

\textit{Acknowledgments}---The authors acknowledge helpful discussions with 
S.~Flammia, C.~Granade and T.~Monz. This research was supported by the U.S. 
Army Research Office through grant W911NF-14-1-0103, CIFAR, the Government of 
Ontario, and the Government of Canada through NSERC and Industry Canada.


\begin{thebibliography}{26}%
	\makeatletter
	\providecommand \@ifxundefined [1]{%
		\@ifx{#1\undefined}
	}%
	\providecommand \@ifnum [1]{%
		\ifnum #1\expandafter \@firstoftwo
		\else \expandafter \@secondoftwo
		\fi
	}%
	\providecommand \@ifx [1]{%
		\ifx #1\expandafter \@firstoftwo
		\else \expandafter \@secondoftwo
		\fi
	}%
	\providecommand \natexlab [1]{#1}%
	\providecommand \enquote  [1]{``#1''}%
	\providecommand \bibnamefont  [1]{#1}%
	\providecommand \bibfnamefont [1]{#1}%
	\providecommand \citenamefont [1]{#1}%
	\providecommand \href@noop [0]{\@secondoftwo}%
	\providecommand \href [0]{\begingroup \@sanitize@url \@href}%
	\providecommand \@href[1]{\@@startlink{#1}\@@href}%
	\providecommand \@@href[1]{\endgroup#1\@@endlink}%
	\providecommand \@sanitize@url [0]{\catcode `\\12\catcode `\$12\catcode
		`\&12\catcode `\#12\catcode `\^12\catcode `\_12\catcode `\%12\relax}%
	\providecommand \@@startlink[1]{}%
	\providecommand \@@endlink[0]{}%
	\providecommand \url  [0]{\begingroup\@sanitize@url \@url }%
	\providecommand \@url [1]{\endgroup\@href {#1}{\urlprefix }}%
	\providecommand \urlprefix  [0]{URL }%
	\providecommand \Eprint [0]{\href }%
	\providecommand \doibase [0]{http://dx.doi.org/}%
	\providecommand \selectlanguage [0]{\@gobble}%
	\providecommand \bibinfo  [0]{\@secondoftwo}%
	\providecommand \bibfield  [0]{\@secondoftwo}%
	\providecommand \translation [1]{[#1]}%
	\providecommand \BibitemOpen [0]{}%
	\providecommand \bibitemStop [0]{}%
	\providecommand \bibitemNoStop [0]{.\EOS\space}%
	\providecommand \EOS [0]{\spacefactor3000\relax}%
	\providecommand \BibitemShut  [1]{\csname bibitem#1\endcsname}%
	\let\auto@bib@innerbib\@empty
	\bibitem [{\citenamefont {Blakestad}\ \emph {et~al.}(2011)\citenamefont
		{Blakestad}, \citenamefont {Ospelkaus}, \citenamefont {Vandevender},
		\citenamefont {Wesenberg}, \citenamefont {Biercuk}, \citenamefont
		{Leibfried},\ and\ \citenamefont {Wineland}}]{Blakestad2011}%
	\BibitemOpen
	\bibfield  {author} {\bibinfo {author} {\bibfnamefont {R.~B.}\ \bibnamefont
			{Blakestad}}, \bibinfo {author} {\bibfnamefont {C.}~\bibnamefont
			{Ospelkaus}}, \bibinfo {author} {\bibfnamefont {A.~P.}\ \bibnamefont
			{Vandevender}}, \bibinfo {author} {\bibfnamefont {J.~H.}\ 
			\bibnamefont
			{Wesenberg}}, \bibinfo {author} {\bibfnamefont {M.~J.}\ \bibnamefont
			{Biercuk}}, \bibinfo {author} {\bibfnamefont {D.}~\bibnamefont 
			{Leibfried}},
		\ and\ \bibinfo {author} {\bibfnamefont {D.~J.}\ \bibnamefont 
		{Wineland}},\
	}\href {\doibase 10.1103/PhysRevA.84.032314} {\bibfield  {journal} {\bibinfo
		{journal} {Phys. Rev. A}\ }\textbf {\bibinfo {volume} {84}},\ \bibinfo
	{pages} {032314} (\bibinfo {year} {2011})},\ \Eprint
{http://arxiv.org/abs/1106.5005} {arXiv:1106.5005} \BibitemShut {NoStop}%
\bibitem [{\citenamefont {Wright}\ \emph {et~al.}(2013)\citenamefont {Wright},
	\citenamefont {Amini}, \citenamefont {Faircloth}, \citenamefont {Volin},
	\citenamefont {Doret}, \citenamefont {Hayden}, \citenamefont {Pai},
	\citenamefont {Landgren}, \citenamefont {Denison}, \citenamefont {Killian},
	\citenamefont {Slusher},\ and\ \citenamefont {Harter}}]{Wright2013}%
\BibitemOpen
\bibfield  {author} {\bibinfo {author} {\bibfnamefont {K.}~\bibnamefont
		{Wright}}, \bibinfo {author} {\bibfnamefont {J.~M.}\ \bibnamefont 
		{Amini}},
	\bibinfo {author} {\bibfnamefont {D.~L.}\ \bibnamefont {Faircloth}}, 
	\bibinfo
	{author} {\bibfnamefont {C.}~\bibnamefont {Volin}}, \bibinfo {author}
	{\bibfnamefont {S.~C.}\ \bibnamefont {Doret}}, \bibinfo {author}
	{\bibfnamefont {H.}~\bibnamefont {Hayden}}, \bibinfo {author} {\bibfnamefont
		{C.-S.}\ \bibnamefont {Pai}}, \bibinfo {author} {\bibfnamefont {D.~W.}\
		\bibnamefont {Landgren}}, \bibinfo {author} {\bibfnamefont 
		{D.}~\bibnamefont
		{Denison}}, \bibinfo {author} {\bibfnamefont {T.}~\bibnamefont 
		{Killian}},
	\bibinfo {author} {\bibfnamefont {R.~E.}\ \bibnamefont {Slusher}}, \ and\
	\bibinfo {author} {\bibfnamefont {A.~W.}\ \bibnamefont {Harter}},\ }\href
{\doibase 10.1088/1367-2630/15/3/033004} {\bibfield  {journal} {\bibinfo
		{journal} {New J. Phys.}\ }\textbf {\bibinfo {volume} {15}},\ \bibinfo
	{pages} {033004} (\bibinfo {year} {2013})},\ \Eprint
{http://arxiv.org/abs/1210.3655} {arXiv:1210.3655} \BibitemShut {NoStop}%
\bibitem [{\citenamefont {Vala}\ \emph {et~al.}(2005)\citenamefont {Vala},
	\citenamefont {Whaley},\ and\ \citenamefont {Weiss}}]{Vala2005}%
\BibitemOpen
\bibfield  {author} {\bibinfo {author} {\bibfnamefont {J.}~\bibnamefont
		{Vala}}, \bibinfo {author} {\bibfnamefont {K.~B.}\ \bibnamefont 
		{Whaley}}, \
	and\ \bibinfo {author} {\bibfnamefont {D.~S.}\ \bibnamefont {Weiss}},\ 
	}\href
{\doibase 10.1103/PhysRevA.72.052318} {\bibfield  {journal} {\bibinfo
		{journal} {Phys. Rev. A}\ }\textbf {\bibinfo {volume} {72}},\ \bibinfo
	{pages} {052318} (\bibinfo {year} {2005})},\ \Eprint
{http://arxiv.org/abs/0510021} {arXiv:0510021 [quant-ph]} \BibitemShut
{NoStop}%
\bibitem [{\citenamefont {Fortescue}\ \emph {et~al.}(2014)\citenamefont
	{Fortescue}, \citenamefont {Nawaf},\ and\ \citenamefont
	{Byrd}}]{Fortescue2014}%
\BibitemOpen
\bibfield  {author} {\bibinfo {author} {\bibfnamefont {B.}~\bibnamefont
		{Fortescue}}, \bibinfo {author} {\bibfnamefont {S.}~\bibnamefont 
		{Nawaf}}, \
	and\ \bibinfo {author} {\bibfnamefont {M.}~\bibnamefont {Byrd}},\ }\href
{http://arxiv.org/abs/1405.1766} {\  (\bibinfo {year} {2014})},\ \Eprint
{http://arxiv.org/abs/1405.1766v1} {arXiv:1405.1766v1} \BibitemShut {NoStop}%
\bibitem [{\citenamefont {Varnava}\ \emph {et~al.}(2006)\citenamefont
	{Varnava}, \citenamefont {Browne},\ and\ \citenamefont
	{Rudolph}}]{Varnava2006}%
\BibitemOpen
\bibfield  {author} {\bibinfo {author} {\bibfnamefont {M.}~\bibnamefont
		{Varnava}}, \bibinfo {author} {\bibfnamefont {D.~E.}\ \bibnamefont 
		{Browne}},
	\ and\ \bibinfo {author} {\bibfnamefont {T.}~\bibnamefont {Rudolph}},\ 
	}\href
{\doibase 10.1103/PhysRevLett.97.120501} {\bibfield  {journal} {\bibinfo
		{journal} {Phys. Rev. Lett.}\ }\textbf {\bibinfo {volume} {97}},\ 
		\bibinfo
	{pages} {120501} (\bibinfo {year} {2006})},\ \Eprint
{http://arxiv.org/abs/0507036} {arXiv:0507036 [quant-ph]} \BibitemShut
{NoStop}%
\bibitem [{\citenamefont {Whiteside}\ and\ \citenamefont
	{Fowler}(2014)}]{Whiteside2014}%
\BibitemOpen
\bibfield  {author} {\bibinfo {author} {\bibfnamefont {A.~C.}\ \bibnamefont
		{Whiteside}}\ and\ \bibinfo {author} {\bibfnamefont {A.~G.}\ 
		\bibnamefont
		{Fowler}},\ }\href {\doibase 10.1103/PhysRevA.90.052316} {\bibfield
	{journal} {\bibinfo  {journal} {Phys. Rev. A}\ }\textbf {\bibinfo {volume}
		{90}},\ \bibinfo {pages} {052316} (\bibinfo {year} {2014})}\BibitemShut
{NoStop}%
\bibitem [{\citenamefont {Hogg}\ \emph {et~al.}(2014)\citenamefont {Hogg},
	\citenamefont {Berry},\ and\ \citenamefont {Lvovsky}}]{Hogg2014}%
\BibitemOpen
\bibfield  {author} {\bibinfo {author} {\bibfnamefont {D.}~\bibnamefont
		{Hogg}}, \bibinfo {author} {\bibfnamefont {D.~W.}\ \bibnamefont 
		{Berry}}, \
	and\ \bibinfo {author} {\bibfnamefont {A.~I.}\ \bibnamefont {Lvovsky}},\
}\href {\doibase 10.1103/PhysRevA.90.053846} {\bibfield  {journal} {\bibinfo
	{journal} {Phys. Rev. A}\ }\textbf {\bibinfo {volume} {90}},\ \bibinfo
{pages} {053846} (\bibinfo {year} {2014})},\ \Eprint
{http://arxiv.org/abs/1408.0257v1} {arXiv:1408.0257v1} \BibitemShut {NoStop}%
\bibitem [{\citenamefont {Chuang}\ and\ \citenamefont
	{Nielsen}(1997)}]{Chuang1997}%
\BibitemOpen
\bibfield  {author} {\bibinfo {author} {\bibfnamefont {I.~L.}\ \bibnamefont
		{Chuang}}\ and\ \bibinfo {author} {\bibfnamefont {M.~A.}\ \bibnamefont
		{Nielsen}},\ }\href {\doibase 10.1080/09500349708231894} {\bibfield
	{journal} {\bibinfo  {journal} {J. Mod. Opt.}\ }\textbf {\bibinfo {volume}
		{44}},\ \bibinfo {pages} {2455} (\bibinfo {year} {1997})}\BibitemShut
{NoStop}%
\bibitem [{\citenamefont {Poyatos}\ \emph {et~al.}(1997)\citenamefont
	{Poyatos}, \citenamefont {Cirac},\ and\ \citenamefont
	{Zoller}}]{Poyatos1997}%
\BibitemOpen
\bibfield  {author} {\bibinfo {author} {\bibfnamefont {J.~F.}~\bibnamefont
		{Poyatos}}, \bibinfo {author} {\bibfnamefont {J.~I.}\ \bibnamefont 
		{Cirac}},
	\ and\ \bibinfo {author} {\bibfnamefont {P.}~\bibnamefont {Zoller}},\ }\href
{\doibase 10.1103/PhysRevLett.78.390} {\bibfield  {journal} {\bibinfo
		{journal} {Phys. Rev. Lett.}\ }\textbf {\bibinfo {volume} {78}},\ 
		\bibinfo
	{pages} {390} (\bibinfo {year} {1997})}\BibitemShut {NoStop}%
\bibitem [{\citenamefont {Merkel}\ \emph {et~al.}(2013)\citenamefont {Merkel},
	\citenamefont {Gambetta}, \citenamefont {Smolin}, \citenamefont {Poletto},
	\citenamefont {C\'{o}rcoles}, \citenamefont {Johnson}, \citenamefont 
	{Ryan},\
	and\ \citenamefont {Steffen}}]{Merkel2013}%
\BibitemOpen
\bibfield  {author} {\bibinfo {author} {\bibfnamefont {S.~T.}\ \bibnamefont
		{Merkel}}, \bibinfo {author} {\bibfnamefont {J.~M.}\ \bibnamefont
		{Gambetta}}, \bibinfo {author} {\bibfnamefont {J.~A.}\ \bibnamefont
		{Smolin}}, \bibinfo {author} {\bibfnamefont {S.}~\bibnamefont 
		{Poletto}},
	\bibinfo {author} {\bibfnamefont {A.~D.}\ \bibnamefont {C\'{o}rcoles}},
	\bibinfo {author} {\bibfnamefont {B.~R.}\ \bibnamefont {Johnson}}, \bibinfo
	{author} {\bibfnamefont {C.~A.}\ \bibnamefont {Ryan}}, \ and\ \bibinfo
	{author} {\bibfnamefont {M.}~\bibnamefont {Steffen}},\ }\href {\doibase
	10.1103/PhysRevA.87.062119} {\bibfield  {journal} {\bibinfo  {journal} 
	{Phys.
			Rev. A}\ }\textbf {\bibinfo {volume} {87}},\ \bibinfo {pages} 
			{062119}
	(\bibinfo {year} {2013})}\BibitemShut {NoStop}%
\bibitem [{\citenamefont {Magesan}\ \emph {et~al.}(2011)\citenamefont
	{Magesan}, \citenamefont {Gambetta},\ and\ \citenamefont
	{Emerson}}]{Magesan2011}%
\BibitemOpen
\bibfield  {author} {\bibinfo {author} {\bibfnamefont {E.}~\bibnamefont
		{Magesan}}, \bibinfo {author} {\bibfnamefont {J.~M.}\ \bibnamefont
		{Gambetta}}, \ and\ \bibinfo {author} {\bibfnamefont {J.}~\bibnamefont
		{Emerson}},\ }\href {\doibase 10.1103/PhysRevLett.106.180504} {\bibfield
	{journal} {\bibinfo  {journal} {Phys. Rev. Lett.}\ }\textbf {\bibinfo
		{volume} {106}},\ \bibinfo {pages} {180504} (\bibinfo {year}
	{2011})}\BibitemShut {NoStop}%
\bibitem [{\citenamefont {Wallman}\ \emph
	{et~al.}(2015{\natexlab{a}})\citenamefont {Wallman}, \citenamefont 
	{Granade},
	\citenamefont {Harper},\ and\ \citenamefont {Flammia}}]{Wallman2015}%
\BibitemOpen
\bibfield  {author} {\bibinfo {author} {\bibfnamefont {J.~J.}\ \bibnamefont
		{Wallman}}, \bibinfo {author} {\bibfnamefont {C.}~\bibnamefont 
		{Granade}},
	\bibinfo {author} {\bibfnamefont {R.}~\bibnamefont {Harper}}, \ and\ 
	\bibinfo
	{author} {\bibfnamefont {S.~T.}\ \bibnamefont {Flammia}},\ }\href@noop {} {\
	(\bibinfo {year} {2015}{\natexlab{a}})},\ \Eprint
{http://arxiv.org/abs/1503.07865v1} {arXiv:1503.07865v1} \BibitemShut
{NoStop}%
\bibitem [{\citenamefont {Emerson}\ \emph {et~al.}(2005)\citenamefont
	{Emerson}, \citenamefont {Alicki},\ and\ \citenamefont
	{\.{Z}yczkowski}}]{Emerson2005}%
\BibitemOpen
\bibfield  {author} {\bibinfo {author} {\bibfnamefont {J.}~\bibnamefont
		{Emerson}}, \bibinfo {author} {\bibfnamefont {R.}~\bibnamefont 
		{Alicki}}, \
	and\ \bibinfo {author} {\bibfnamefont {K.}~\bibnamefont {\.{Z}yczkowski}},\
}\href {\doibase 10.1088/1464-4266/7/10/021} {\bibfield  {journal} {\bibinfo
	{journal} {J. Opt. B Quantum Semiclassical Opt.}\ }\textbf {\bibinfo 
	{volume}
	{7}},\ \bibinfo {pages} {S347} (\bibinfo {year} {2005})}\BibitemShut
{NoStop}%
\bibitem [{\citenamefont {Knill}\ \emph {et~al.}(2008)\citenamefont {Knill},
	\citenamefont {Leibfried}, \citenamefont {Reichle}, \citenamefont {Britton},
	\citenamefont {Blakestad}, \citenamefont {Jost}, \citenamefont {Langer},
	\citenamefont {Ozeri}, \citenamefont {Seidelin},\ and\ \citenamefont
	{Wineland}}]{Knill2008}%
\BibitemOpen
\bibfield  {author} {\bibinfo {author} {\bibfnamefont {E.}~\bibnamefont
		{Knill}}, \bibinfo {author} {\bibfnamefont {D.}~\bibnamefont 
		{Leibfried}},
	\bibinfo {author} {\bibfnamefont {R.}~\bibnamefont {Reichle}}, \bibinfo
	{author} {\bibfnamefont {J.}~\bibnamefont {Britton}}, \bibinfo {author}
	{\bibfnamefont {R.~B.}\ \bibnamefont {Blakestad}}, \bibinfo {author}
	{\bibfnamefont {J.~D.}\ \bibnamefont {Jost}}, \bibinfo {author}
	{\bibfnamefont {C.}~\bibnamefont {Langer}}, \bibinfo {author} {\bibfnamefont
		{R.}~\bibnamefont {Ozeri}}, \bibinfo {author} {\bibfnamefont
		{S.}~\bibnamefont {Seidelin}}, \ and\ \bibinfo {author} {\bibfnamefont
		{D.~J.}\ \bibnamefont {Wineland}},\ }\href {\doibase
	10.1103/PhysRevA.77.012307} {\bibfield  {journal} {\bibinfo  {journal} 
	{Phys.
			Rev. A}\ }\textbf {\bibinfo {volume} {77}},\ \bibinfo {pages} 
			{012307}
	(\bibinfo {year} {2008})}\BibitemShut {NoStop}%
\bibitem [{\citenamefont {Dankert}\ \emph {et~al.}(2009)\citenamefont
	{Dankert}, \citenamefont {Cleve}, \citenamefont {Emerson},\ and\
	\citenamefont {Livine}}]{Dankert2009}%
\BibitemOpen
\bibfield  {author} {\bibinfo {author} {\bibfnamefont {C.}~\bibnamefont
		{Dankert}}, \bibinfo {author} {\bibfnamefont {R.}~\bibnamefont {Cleve}},
	\bibinfo {author} {\bibfnamefont {J.}~\bibnamefont {Emerson}}, \ and\
	\bibinfo {author} {\bibfnamefont {E.}~\bibnamefont {Livine}},\ }\href
{\doibase 10.1103/PhysRevA.80.012304} {\bibfield  {journal} {\bibinfo
		{journal} {Phys. Rev. A}\ }\textbf {\bibinfo {volume} {80}},\ \bibinfo
	{pages} {012304} (\bibinfo {year} {2009})}\BibitemShut {NoStop}%
\bibitem [{\citenamefont {Magesan}\ \emph {et~al.}(2012)\citenamefont
	{Magesan}, \citenamefont {Gambetta},\ and\ \citenamefont
	{Emerson}}]{Magesan2012a}%
\BibitemOpen
\bibfield  {author} {\bibinfo {author} {\bibfnamefont {E.}~\bibnamefont
		{Magesan}}, \bibinfo {author} {\bibfnamefont {J.~M.}\ \bibnamefont
		{Gambetta}}, \ and\ \bibinfo {author} {\bibfnamefont {J.}~\bibnamefont
		{Emerson}},\ }\href {\doibase 10.1103/PhysRevA.85.042311} {\bibfield
	{journal} {\bibinfo  {journal} {Phys. Rev. A}\ }\textbf {\bibinfo {volume}
		{85}},\ \bibinfo {pages} {042311} (\bibinfo {year} {2012})}\BibitemShut
{NoStop}%
\bibitem [{\citenamefont {Wallman}\ and\ \citenamefont
	{Flammia}(2014)}]{Wallman2014}%
\BibitemOpen
\bibfield  {author} {\bibinfo {author} {\bibfnamefont {J.~J.}\ \bibnamefont
		{Wallman}}\ and\ \bibinfo {author} {\bibfnamefont {S.~T.}\ \bibnamefont
		{Flammia}},\ }\href {\doibase 10.1088/1367-2630/16/10/103032} {\bibfield
	{journal} {\bibinfo  {journal} {New J. Phys.}\ }\textbf {\bibinfo {volume}
		{16}},\ \bibinfo {pages} {103032} (\bibinfo {year} {2014})}\BibitemShut
{NoStop}%
\bibitem [{\citenamefont {Ball}\ \emph {et~al.}(2015)\citenamefont {Ball},
	\citenamefont {Stace}, \citenamefont {Flammia},\ and\ \citenamefont
	{Biercuk}}]{Ball2015}%
\BibitemOpen
\bibfield  {author} {\bibinfo {author} {\bibfnamefont {H.}~\bibnamefont
		{Ball}}, \bibinfo {author} {\bibfnamefont {T.~M.}\ \bibnamefont 
		{Stace}},
	\bibinfo {author} {\bibfnamefont {S.~T.}\ \bibnamefont {Flammia}}, \ and\
	\bibinfo {author} {\bibfnamefont {M.~J.}\ \bibnamefont {Biercuk}},\ }\href
{http://arxiv.org/pdf/1504.05307.pdf} {\  (\bibinfo {year} {2015})},\ \Eprint
{http://arxiv.org/abs/1504.05307v1} {arXiv:1504.05307v1} \BibitemShut
{NoStop}%
\bibitem [{\citenamefont {Ambainis}\ \emph {et~al.}(2000)\citenamefont
	{Ambainis}, \citenamefont {Mosca}, \citenamefont {Tapp},\ and\ \citenamefont
	{Wolf}}]{Ambainis2000}%
\BibitemOpen
\bibfield  {author} {\bibinfo {author} {\bibfnamefont {A.}~\bibnamefont
		{Ambainis}}, \bibinfo {author} {\bibfnamefont {M.}~\bibnamefont 
		{Mosca}},
	\bibinfo {author} {\bibfnamefont {A.}~\bibnamefont {Tapp}}, \ and\ \bibinfo
	{author} {\bibfnamefont {R.~D.}\ \bibnamefont {Wolf}},\ }\href {\doibase
	10.1109/SFCS.2000.892142} {\bibfield  {journal} {\bibinfo  {journal} {Proc.
			41st Annu. Symp. Found. Comput. Sci.}\ } (\bibinfo {year} {2000}),\
	10.1109/SFCS.2000.892142}\BibitemShut {NoStop}%
\bibitem [{\citenamefont {Schindler}\ \emph {et~al.}(2013)\citenamefont
	{Schindler}, \citenamefont {Nigg}, \citenamefont {Monz}, \citenamefont
	{Barreiro}, \citenamefont {Martinez}, \citenamefont {Wang}, \citenamefont
	{Quint}, \citenamefont {Brandl}, \citenamefont {Nebendahl}, \citenamefont
	{Roos}, \citenamefont {Chwalla}, \citenamefont {Hennrich},\ and\
	\citenamefont {Blatt}}]{Schindler2013}%
\BibitemOpen
\bibfield  {author} {\bibinfo {author} {\bibfnamefont {P.}~\bibnamefont
		{Schindler}}, \bibinfo {author} {\bibfnamefont {D.}~\bibnamefont 
		{Nigg}},
	\bibinfo {author} {\bibfnamefont {T.}~\bibnamefont {Monz}}, \bibinfo 
	{author}
	{\bibfnamefont {J.~T.}\ \bibnamefont {Barreiro}}, \bibinfo {author}
	{\bibfnamefont {E.}~\bibnamefont {Martinez}}, \bibinfo {author}
	{\bibfnamefont {S.~X.}\ \bibnamefont {Wang}}, \bibinfo {author}
	{\bibfnamefont {S.}~\bibnamefont {Quint}}, \bibinfo {author} {\bibfnamefont
		{M.~F.}\ \bibnamefont {Brandl}}, \bibinfo {author} {\bibfnamefont
		{V.}~\bibnamefont {Nebendahl}}, \bibinfo {author} {\bibfnamefont 
		{C.~F.}\
		\bibnamefont {Roos}}, \bibinfo {author} {\bibfnamefont {M.}~\bibnamefont
		{Chwalla}}, \bibinfo {author} {\bibfnamefont {M.}~\bibnamefont 
		{Hennrich}}, \
	and\ \bibinfo {author} {\bibfnamefont {R.}~\bibnamefont {Blatt}},\ }\href
{\doibase 10.1088/1367-2630/15/12/123012} {\bibfield  {journal} {\bibinfo
		{journal} {New J. Phys.}\ }\textbf {\bibinfo {volume} {15}} (\bibinfo 
		{year}
	{2013}),\ 10.1088/1367-2630/15/12/123012},\ \Eprint
{http://arxiv.org/abs/1308.3096} {arXiv:1308.3096} \BibitemShut {NoStop}%
\bibitem [{\citenamefont {DiCarlo}\ \emph {et~al.}(2009)\citenamefont
	{DiCarlo}, \citenamefont {Chow}, \citenamefont {Gambetta}, \citenamefont
	{Bishop}, \citenamefont {Johnson}, \citenamefont {Schuster}, \citenamefont
	{Majer}, \citenamefont {Blais}, \citenamefont {Frunzio}, \citenamefont
	{Girvin},\ and\ \citenamefont {Schoelkopf}}]{DiCarlo2009}%
\BibitemOpen
\bibfield  {author} {\bibinfo {author} {\bibfnamefont {L.}~\bibnamefont
		{DiCarlo}}, \bibinfo {author} {\bibfnamefont {J.~M.}\ \bibnamefont 
		{Chow}},
	\bibinfo {author} {\bibfnamefont {J.~M.}\ \bibnamefont {Gambetta}}, \bibinfo
	{author} {\bibfnamefont {L.~S.}\ \bibnamefont {Bishop}}, \bibinfo {author}
	{\bibfnamefont {B.~R.}\ \bibnamefont {Johnson}}, \bibinfo {author}
	{\bibfnamefont {D.~I.}\ \bibnamefont {Schuster}}, \bibinfo {author}
	{\bibfnamefont {J.}~\bibnamefont {Majer}}, \bibinfo {author} {\bibfnamefont
		{A.}~\bibnamefont {Blais}}, \bibinfo {author} {\bibfnamefont
		{L.}~\bibnamefont {Frunzio}}, \bibinfo {author} {\bibfnamefont {S.~M.}\
		\bibnamefont {Girvin}}, \ and\ \bibinfo {author} {\bibfnamefont {R.~J.}\
		\bibnamefont {Schoelkopf}},\ }\href {\doibase 10.1038/nature08121} 
		{\bibfield
	{journal} {\bibinfo  {journal} {Nature}\ }\textbf {\bibinfo {volume}
		{460}},\ \bibinfo {pages} {240} (\bibinfo {year} {2009})},\ \Eprint
{http://arxiv.org/abs/0903.2030} {arXiv:0903.2030} \BibitemShut {NoStop}%
\bibitem [{\citenamefont {DiCarlo}\ \emph {et~al.}(2010)\citenamefont
	{DiCarlo}, \citenamefont {Reed}, \citenamefont {Sun}, \citenamefont
	{Johnson}, \citenamefont {Chow}, \citenamefont {Gambetta}, \citenamefont
	{Frunzio}, \citenamefont {Girvin}, \citenamefont {Devoret},\ and\
	\citenamefont {Schoelkopf}}]{Dicarlo2010}%
\BibitemOpen
\bibfield  {author} {\bibinfo {author} {\bibfnamefont {L.}~\bibnamefont
		{DiCarlo}}, \bibinfo {author} {\bibfnamefont {M.~D.}\ \bibnamefont 
		{Reed}},
	\bibinfo {author} {\bibfnamefont {L.}~\bibnamefont {Sun}}, \bibinfo {author}
	{\bibfnamefont {B.~R.}\ \bibnamefont {Johnson}}, \bibinfo {author}
	{\bibfnamefont {J.~M.}\ \bibnamefont {Chow}}, \bibinfo {author}
	{\bibfnamefont {J.~M.}\ \bibnamefont {Gambetta}}, \bibinfo {author}
	{\bibfnamefont {L.}~\bibnamefont {Frunzio}}, \bibinfo {author} 
	{\bibfnamefont
		{S.~M.}\ \bibnamefont {Girvin}}, \bibinfo {author} {\bibfnamefont 
		{M.~H.}\
		\bibnamefont {Devoret}}, \ and\ \bibinfo {author} {\bibfnamefont 
		{R.~J.}\
		\bibnamefont {Schoelkopf}},\ }\href {\doibase 10.1038/nature09416} 
		{\bibfield
	{journal} {\bibinfo  {journal} {Nature}\ }\textbf {\bibinfo {volume}
		{467}},\ \bibinfo {pages} {574} (\bibinfo {year} {2010})},\ \Eprint
{http://arxiv.org/abs/1004.4324} {arXiv:1004.4324} \BibitemShut {NoStop}%
\bibitem [{\citenamefont {Epstein}\ \emph {et~al.}(2014)\citenamefont
	{Epstein}, \citenamefont {Cross}, \citenamefont {Magesan},\ and\
	\citenamefont {Gambetta}}]{Epstein2014}%
\BibitemOpen
\bibfield  {author} {\bibinfo {author} {\bibfnamefont {J.~M.}\ \bibnamefont
		{Epstein}}, \bibinfo {author} {\bibfnamefont {A.~W.}\ \bibnamefont 
		{Cross}},
	\bibinfo {author} {\bibfnamefont {E.}~\bibnamefont {Magesan}}, \ and\
	\bibinfo {author} {\bibfnamefont {J.~M.}\ \bibnamefont {Gambetta}},\ }\href
{\doibase 10.1103/PhysRevA.89.062321} {\bibfield  {journal} {\bibinfo
		{journal} {Phys. Rev. A}\ }\textbf {\bibinfo {volume} {89}},\ \bibinfo
	{pages} {062321} (\bibinfo {year} {2014})}\BibitemShut {NoStop}%
\bibitem [{\citenamefont {Wallman}\ \emph
	{et~al.}(2015{\natexlab{b}})\citenamefont {Wallman}, \citenamefont
	{Barnhill},\ and\ \citenamefont {Emerson}}]{Wallman2015a}%
\BibitemOpen
\bibfield  {author} {\bibinfo {author} {\bibfnamefont {J.~J.}\ \bibnamefont
		{Wallman}}, \bibinfo {author} {\bibfnamefont {M.}~\bibnamefont 
		{Barnhill}}, \
	and\ \bibinfo {author} {\bibfnamefont {J.}~\bibnamefont {Emerson}},\
}\href@noop {} {\  (\bibinfo {year} {2015}{\natexlab{b}})},\ \Eprint
{http://arxiv.org/abs/1412.4126v2} {arXiv:1412.4126v2} \BibitemShut {NoStop}%
\bibitem [{\citenamefont {Kelly}\ \emph {et~al.}(2014)\citenamefont {Kelly},
	\citenamefont {Barends}, \citenamefont {Campbell}, \citenamefont {Chen},
	\citenamefont {Chen}, \citenamefont {Chiaro}, \citenamefont {Dunsworth},
	\citenamefont {Fowler}, \citenamefont {Hoi}, \citenamefont {Jeffrey},
	\citenamefont {Megrant}, \citenamefont {Mutus}, \citenamefont {Neill},
	\citenamefont {O`Malley}, \citenamefont {Quintana}, \citenamefont {Roushan},
	\citenamefont {Sank}, \citenamefont {Vainsencher}, \citenamefont {Wenner},
	\citenamefont {White}, \citenamefont {Cleland},\ and\ \citenamefont
	{Martinis}}]{Kelly2014}%
\BibitemOpen
\bibfield  {author} {\bibinfo {author} {\bibfnamefont {J.}~\bibnamefont
		{Kelly}}, \bibinfo {author} {\bibfnamefont {R.}~\bibnamefont {Barends}},
	\bibinfo {author} {\bibfnamefont {B.}~\bibnamefont {Campbell}}, \bibinfo
	{author} {\bibfnamefont {Y.}~\bibnamefont {Chen}}, \bibinfo {author}
	{\bibfnamefont {Z.}~\bibnamefont {Chen}}, \bibinfo {author} {\bibfnamefont
		{B.}~\bibnamefont {Chiaro}}, \bibinfo {author} {\bibfnamefont
		{A.}~\bibnamefont {Dunsworth}}, \bibinfo {author} {\bibfnamefont 
		{A.~G.}\
		\bibnamefont {Fowler}}, \bibinfo {author} {\bibfnamefont {I.~C.}\
		\bibnamefont {Hoi}}, \bibinfo {author} {\bibfnamefont {E.}~\bibnamefont
		{Jeffrey}}, \bibinfo {author} {\bibfnamefont {A.}~\bibnamefont 
		{Megrant}},
	\bibinfo {author} {\bibfnamefont {J.}~\bibnamefont {Mutus}}, \bibinfo
	{author} {\bibfnamefont {C.}~\bibnamefont {Neill}}, \bibinfo {author}
	{\bibfnamefont {P.~J.~J.}\ \bibnamefont {O`Malley}}, \bibinfo {author}
	{\bibfnamefont {C.}~\bibnamefont {Quintana}}, \bibinfo {author}
	{\bibfnamefont {P.}~\bibnamefont {Roushan}}, \bibinfo {author} 
	{\bibfnamefont
		{D.}~\bibnamefont {Sank}}, \bibinfo {author} {\bibfnamefont 
		{A.}~\bibnamefont
		{Vainsencher}}, \bibinfo {author} {\bibfnamefont {J.}~\bibnamefont 
		{Wenner}},
	\bibinfo {author} {\bibfnamefont {T.~C.}\ \bibnamefont {White}}, \bibinfo
	{author} {\bibfnamefont {A.~N.}\ \bibnamefont {Cleland}}, \ and\ \bibinfo
	{author} {\bibfnamefont {J.~M.}\ \bibnamefont {Martinis}},\ }\href
{http://arxiv.org/abs/1403.0035} {\bibfield  {journal} {\bibinfo  {journal}
		{Phys. Rev. Lett.}\ }\textbf {\bibinfo {volume} {112}},\ \bibinfo 
		{pages}
	{240504} (\bibinfo {year} {2014})},\ \Eprint 
	{http://arxiv.org/abs/1403.0035}
{arXiv:1403.0035} \BibitemShut {NoStop}%
\bibitem [{\citenamefont {Granade}\ \emph {et~al.}(2014)\citenamefont
	{Granade}, \citenamefont {Ferrie},\ and\ \citenamefont 
	{Cory}}]{Granade2014}%
\BibitemOpen
\bibfield  {author} {\bibinfo {author} {\bibfnamefont {C.}~\bibnamefont
		{Granade}}, \bibinfo {author} {\bibfnamefont {C.}~\bibnamefont 
		{Ferrie}}, \
	and\ \bibinfo {author} {\bibfnamefont {D.~G.}\ \bibnamefont {Cory}},\ }\href
{\doibase 10.1088/1367-2630/17/1/013042} {\bibfield  {journal} {\bibinfo
		{journal} {New J. Phys.}\ }\textbf {\bibinfo {volume} {17}},\ \bibinfo
	{pages} {013042} (\bibinfo {year} {2014})},\ \Eprint
{http://arxiv.org/abs/1404.5275v1} {arXiv:1404.5275v1} \BibitemShut {NoStop}%
\end{thebibliography}
\end{document}